\documentclass[sigconf]{acmart}


\AtBeginDocument{%
  \providecommand\BibTeX{{%
    \normalfont B\kern-0.5em{\scshape i\kern-0.25em b}\kern-0.8em\TeX}}}

\setcopyright{acmcopyright}
\copyrightyear{2018}
\acmYear{2018}
\acmDOI{XXXXXXX.XXXXXXX}

\usepackage{color}

\usepackage[utf8]{inputenc} 
\usepackage[T1]{fontenc}    
\usepackage{hyperref}       
\usepackage{url}            
\usepackage{booktabs}       
\usepackage{amsfonts}       
\usepackage{nicefrac}       
\usepackage{microtype}      
\usepackage{xcolor}         
\usepackage{float}
\usepackage{booktabs}
\usepackage{array}
\usepackage{balance} 
\usepackage{multirow}
\usepackage[normalem]{ulem}
\usepackage{color}
\definecolor{lightgray}{RGB}{215,215,215}
\usepackage{colortbl}  
\usepackage{xcolor}
\useunder{\uline}{\ul}{}
\usepackage{subfigure}
\usepackage{amsmath}
\usepackage{wrapfig}

\newtheorem{theorem}{Theorem}
\newtheorem{lemma}{Lemma}
\usepackage{algorithm}  
\usepackage{algorithmicx}  
\usepackage{algpseudocode}  
\usepackage{amsmath}  
\usepackage{enumitem}

\floatname{algorithm}{Algorithm}

\usepackage{amsmath}  
\usepackage{enumitem}
\usepackage{tabularx}
\usepackage[utf8]{inputenc}
\usepackage[english]{babel}
\usepackage{amsthm}
\usepackage{bm}
\usepackage{graphicx}
\usepackage{caption}
\usepackage{wrapfig}
\usepackage{verbatim}

\newcommand{\ie}{\emph{i.e., }}
\newcommand{\eg}{\emph{e.g., }}

\newcommand{\cf}{\emph{cf. }}

\newtheorem{proposition}{Proposition}

\copyrightyear{2024}
\acmYear{2024}
\setcopyright{acmlicensed}\acmConference[CIKM '24]{Proceedings of the 33rd ACM International Conference on Information and Knowledge Management}{October 21--25, 2024}{Boise, ID, USA}
\acmBooktitle{Proceedings of the 33rd ACM International Conference on Information and Knowledge Management (CIKM '24), October 21--25, 2024, Boise, ID, USA}
\acmDOI{10.1145/3627673.3679569}
\acmISBN{979-8-4007-0436-9/24/10}
\begin{document}

\title{Learnable Item Tokenization for Generative Recommendation}

\author{Wenjie Wang}
\authornote{denotes equal contribution. This research is supported by NExT Research Center.}
\email{wenjiewang96@gmail.com}
\affiliation{
\institution{National University of Singapore}
\country{Singapore}
}

\author{Honghui Bao}
\authornotemark[1]
\email{honghuibao2000@gmail.com}
\affiliation{
  \institution{National University of Singapore}
\country{Singapore}
}
\author{Xinyu Lin}
\email{xylin1028@gmail.com}
\affiliation{
\institution{National University of Singapore}
\country{Singapore}
}

\author{Jizhi Zhang}
\email{cdzhangjizhi@mail.ustc.edu.cn}
\affiliation{
\institution{University of Science and Technology of China}
\city{Hefei}
\country{China}
}
\author{Yongqi Li}
\email{liyongqi0@gmail.com}
\affiliation{
\institution{The Hong Kong Polytechnic University}
\city{Hong Kong SAR}
\country{China}
}

\author{Fuli Feng}
\email{fulifeng93@gmail.com}
\authornote{Corresponding author.}
\affiliation{
\institution{University of Science and Technology of China}
\city{Hefei}
\country{China}
}

\author{See-Kiong Ng}
\email{seekiong@nus.edu.sg}
\affiliation{
  \institution{National University of Singapore}
\country{Singapore}
}

\author{Tat-Seng Chua}
\email{dcscts@nus.edu.sg}
\affiliation{
\institution{National University of Singapore}
\city{}
\country{Singapore}
}
\renewcommand{\shortauthors}{Wenjie Wang et al.}
\begin{abstract}

Utilizing powerful Large Language Models (LLMs) for generative recommendation has attracted much attention. 
Nevertheless, a crucial challenge is transforming recommendation data into the language space of LLMs through effective item tokenization. 
Current approaches, such as ID, textual, and codebook-based identifiers, exhibit shortcomings in encoding semantic information, incorporating collaborative signals, or handling code assignment bias.
To address these limitations, we propose LETTER (a LEarnable Tokenizer for generaTivE Recommendation), which integrates hierarchical semantics, collaborative signals, and code assignment diversity to satisfy the essential requirements of identifiers. 
LETTER incorporates Residual Quantized VAE for semantic regularization, a contrastive alignment loss for collaborative regularization, and a diversity loss to mitigate code assignment bias. We instantiate LETTER on two models and propose a ranking-guided generation loss to augment their ranking ability theoretically. Experiments on three datasets validate the superiority of LETTER, advancing the state-of-the-art in the field of LLM-based generative recommendation. 
\vspace{-0.25cm}
\end{abstract}

\begin{CCSXML}
<ccs2012>
<concept>
<concept_id>10002951.10003317.10003347.10003350</concept_id>
<concept_desc>Information systems~Recommender systems</concept_desc>
<concept_significance>500</concept_significance>
</concept>
</ccs2012>
\end{CCSXML}

\ccsdesc[500]{Information systems~Recommender systems}

\keywords{Generative Recommendation, LLMs for Recommendation, Item Tokenization, Learnable Tokenizer}



\maketitle

\section{Introduction}
\label{sec:introduction}

Harnessing Large Language Models (LLMs) for generative recommendation has emerged as a prominent avenue, drawing substantial research interest~\cite{hua2023index,rajput2023recommender}. 
Leveraging the powerful abilities of LLMs (\eg rich world knowledge, reasoning, and generalization), numerous efforts are investigating LLMs for generative recommendation, showcasing the potential of building foundational LLMs to move beyond natural language processing to information retrieval and recommendation domains~\cite{geng2022recommendation,cui2022m6}. 
As illustrated in Figure~\ref{fig:overview}, given user historically interacted items, generative recommendation aims to utilize LLMs to generate target items as recommendations. 
A key problem to encoding and generating items via LLMs is \textit{item tokenization}, which indexes each item via an identifier (\ie a token sequence), thereby bridging the gap between recommendation data and the language space of LLMs. 
As such, it is imperative to investigate item tokenization for LLM-based generative recommendation.


\begin{figure}[t]
\setlength{\abovecaptionskip}{0cm}
\setlength{\belowcaptionskip}{-0.30cm}
\centering
\includegraphics[scale=0.58]{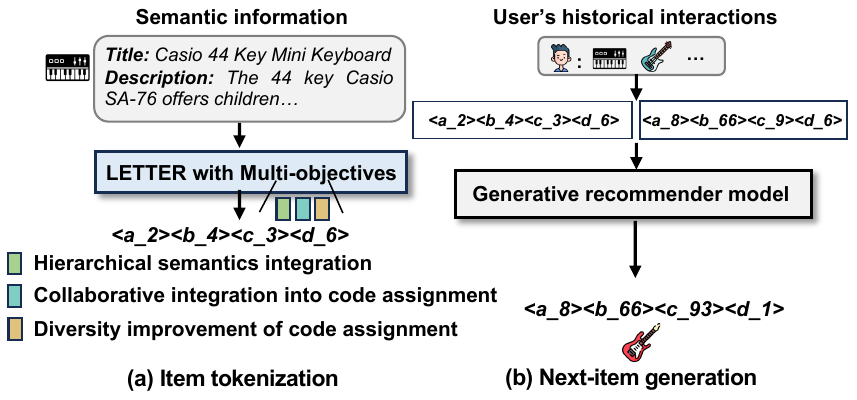}
\caption{Overview of utilizing LETTER for LLM-based generative recommendation.}
\label{fig:overview}
\end{figure}

Existing item tokenization approaches have mainly explored three types of identifiers to index an item: 
\begin{itemize}[leftmargin=*]
    \item ID identifiers assign each item with a unique numerical string (\eg ``28,128''), ensuring the identifier uniqueness~\cite{geng2022recommendation,hua2023index}. However, numerical IDs are inefficient in encoding semantic information, making it challenging to generalize to cold-start items~\cite{tan2024towards}. 
    
    \item Textual identifiers directly leverage item semantic information such as titles, attributes, or/and descriptions as item identifiers~\cite{bao2023bi,hua2023index}. Nevertheless, such textual identifiers suffer from the following issues: 
    1) semantic information is not hierarchically distributed from coarse to fine-grained in token sequence of the identifiers. This results in low generation probability if the first tokens of the target item are not closely aligned with user preference (\eg ``With'' and ``Be'' in movie titles); 
    2) textual identifiers lack collaborative signals from user behaviors~\cite{bao2023bi}.
    The token sequence of an identifier is solely determined by semantic information. As such, items with similar semantics yet differing collaborative signals exhibit similar token sequences, undermining the distinctiveness of items from the perspective of Collaborative Filtering (CF) (refer to Section~\ref{sec:task} and Figure~\ref{fig:cf_misalignment}). 

    \item Codebook-based methods adopt auto-encoders to encode item semantics into hierarchical code sequences as identifiers~\cite{rajput2023recommender, zheng2023adapting}. However, similar to textual identifiers, codebook-based identifiers still lack collaborative signals in code sequences (see Section~\ref{sec:task}). 
    Moreover, as shown in Figure~\ref{fig:Intro-distribution}, the assignment of codes to items is naturally imbalanced, causing the item generation bias where items with high-frequency codes are more readily generated.      
\end{itemize}




In light of these considerations, an ideal identifier should meet the following criteria: 
\begin{itemize}[leftmargin=*]
    \item Integrating hierarchical semantics into identifiers, where the token sequence initially encodes broad and coarse-grained semantics, progressively transitioning into more refined, fine-grained details. This aligns with the autoregressive generation characteristics of generative recommendation. 
    \item Incorporating collaborative signals into the token assignment, which ensures that items with similar collaborative signals in user behaviors possess similar token sequences as identifiers.
    \item Improving the diversity of token assignments to alleviate the item generation bias, thereby ensuring fairness in item generation.
\end{itemize}


To this end, we propose a \textbf{LE}arnable \textbf{T}okenizer for genera\textbf{T}iv\textbf{E} \textbf{R}ecommendation named LETTER. 
LETTER incorporates three kinds of regularization to enhance the codebook-based identifiers. In particular, semantic regularization first integrates Residual Quantized VAE (RQ-VAE)~\cite{lee2022autoregressive} to translate the item semantic information into the hierarchical identifiers~\cite{rajput2023recommender}. Built on this, collaborative regularization utilizes a contrastive alignment loss to align semantic quantized embeddings in RQ-VAE with CF embeddings from a well-trained CF model (\eg LightGCN~\cite{he2020lightgcn}). Moreover, LETTER introduces a diversity loss to enhance code embedding diversity to alleviate code assignment bias and item generation bias. We instantiate LETTER on two representative generative recommender models and propose a ranking-guided generation loss to boost the ranking ability of these generative models theoretically. Extensive experiments on three datasets with in-depth investigation have validated that LETTER can achieve superior item tokenization by simultaneously considering hierarchical semantics, collaborative signals, and code assignment diversity in identifiers. For reproducibility, we release our code and data at \url{https://github.com/HonghuiBao2000/LETTER}.

To sum up, the contributions of this work are as follows.
\begin{itemize}[leftmargin=*]
    \item We comprehensively analyze the necessary features of an ideal identifier. Accordingly, we propose a novel learnable tokenizer named LETTER, aimed at adaptively learning identifiers that encompass hierarchical semantics, collaborative signals, and code assignment diversity.

    \item We instantiate LETTER on two generative recommender models and leverage a ranking-guided generation loss to theoretically enhance the ranking ability of generative recommender models.

    \item We conduct extensive experiments on three datasets, coupled with the in-depth investigation with diverse settings, validating that LETTER outperforms existing item tokenization methods for generative recommendation.
    
\end{itemize}
\section{Item Tokenization}
\label{sec:task}

Generative recommendation aims to generate the next item given the user's historical interactions~\cite{li2023large}. 
A crucial fundamental step in LLM-based generative recommendation lies in item tokenization, which assigns an identifier to each item. 
Each identifier is a sequence of tokens to assist LLMs with item encoding and generation. 
By item tokenization, LLM-based generative recommender models can 1) transform the user's historical interactions into a sequence of identifiers for item encoding and 2) autoregressively generate an item identifier for next-item recommendation (see Figure~\ref{fig:overview}(b)). 

\begin{figure}[t]
\setlength{\abovecaptionskip}{0cm}
\setlength{\belowcaptionskip}{-0.30cm}
\centering
\includegraphics[scale=0.66]{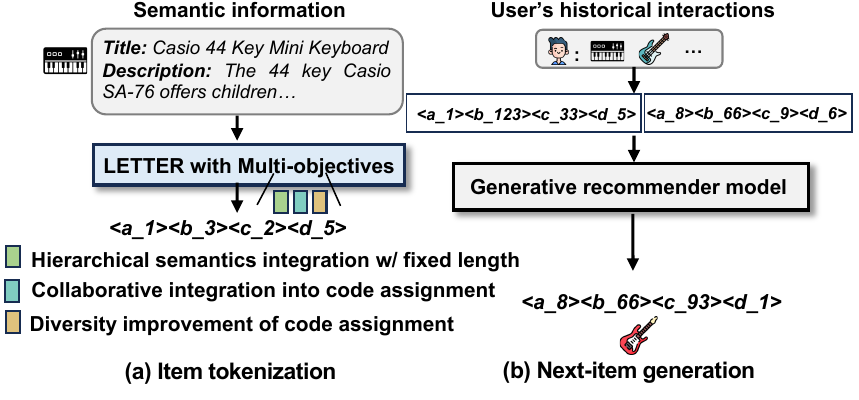}
\caption{Misalignment between item identifiers and collaborative signals. ``Emb.'' denotes ``Embeddings''.}
\label{fig:cf_misalignment}
\end{figure}

\begin{figure}[t]
\setlength{\abovecaptionskip}{0.01cm}
\setlength{\belowcaptionskip}{-0.50cm}
\centering
\includegraphics[scale=0.42]{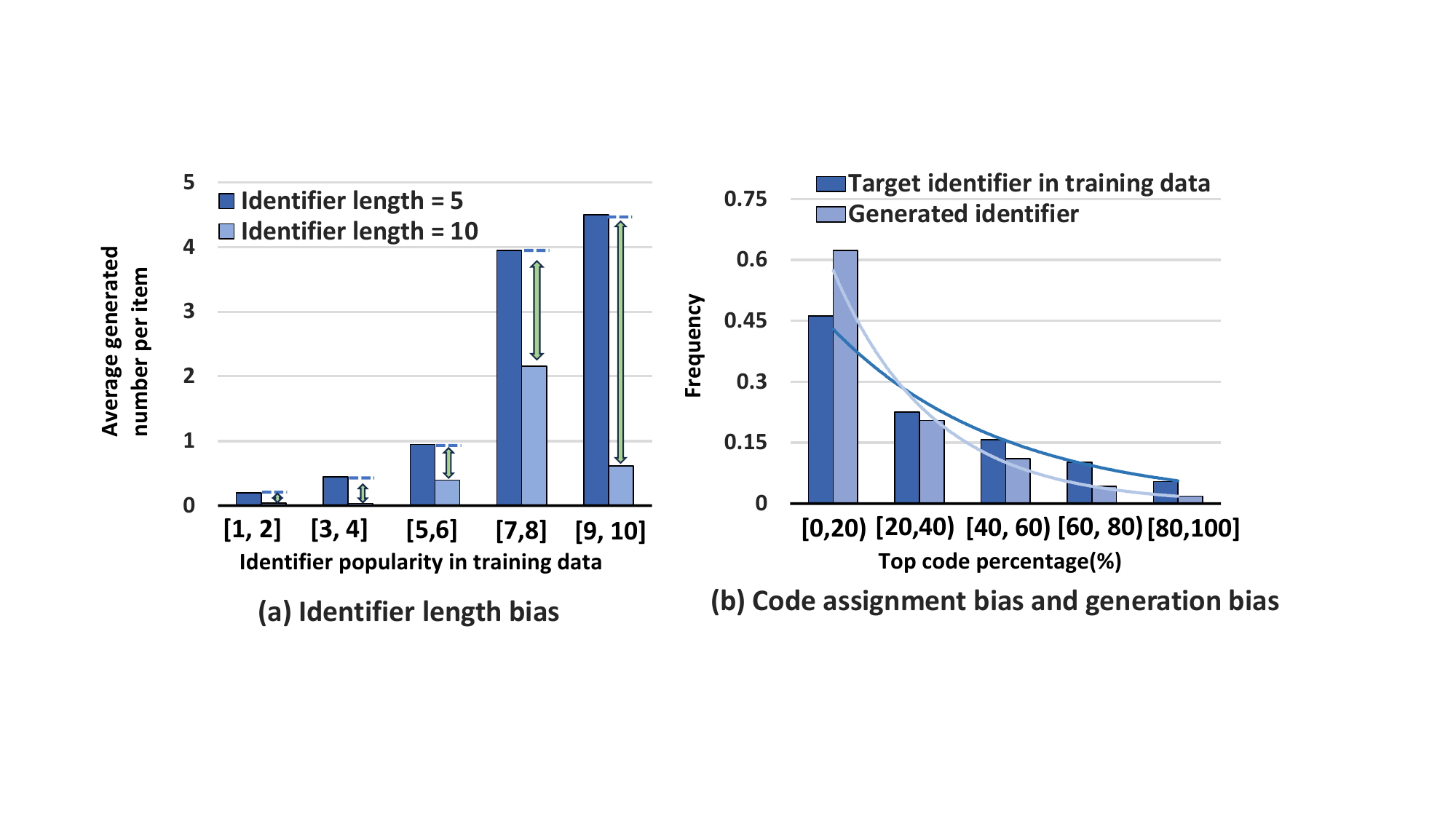}
\caption{Illustration of code assignment bias and generation bias on Instruments.}
\label{fig:Intro-distribution}
\end{figure}

\begin{figure*}[t]
\setlength{\abovecaptionskip}{0.10cm}
\setlength{\belowcaptionskip}{-0.25cm}
\centering
\includegraphics[scale=0.60]{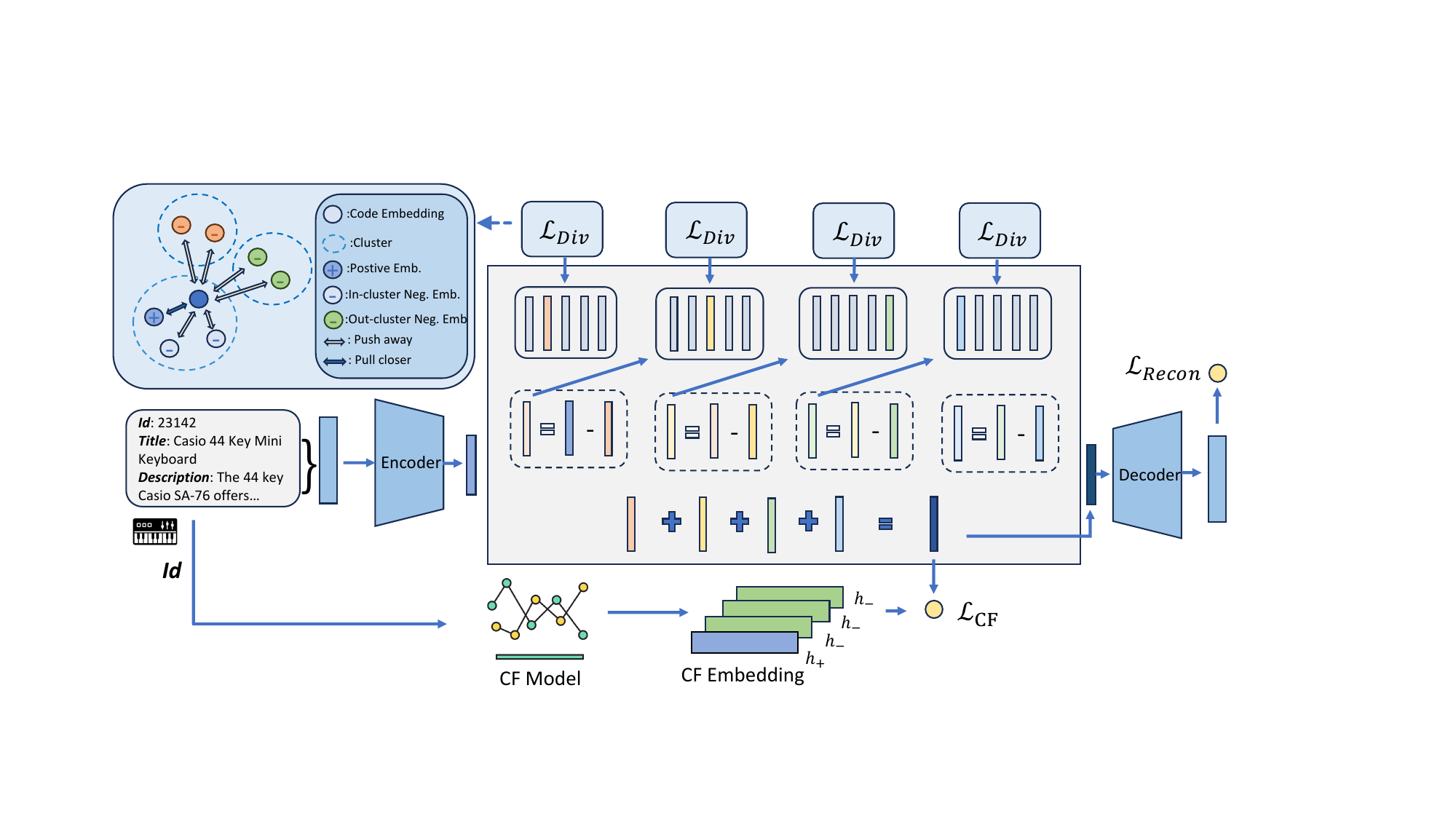}
\caption{Illustration of LETTER with three kinds of regularization, where semantic regularization ensures the semantic encoding, collaborative regularization enhances the alignment between the identifiers' code sequence and collaborative signals, and diversity regularization alleviates the code assignment bias.}
\label{fig:Main-Figure}
\end{figure*}
\vspace{3pt}

Existing item token methods have explored ID identifiers~\cite{hua2023index, geng2022recommendation, wang2024enhanced,chu2023leveraging}, textual identifiers~\cite{bao2023bi,zhang2021language,zhang2023recommendation}, and codebook-based identifiers~\cite{rajput2023recommender,zheng2023adapting}. However, ID identifiers have insufficient capability in semantic encoding~\cite{tan2024towards}. Moreover, we present more details on the critical issues in textual and codebook-based identifiers. 
\begin{itemize}[leftmargin=*]

    \item \textbf{Non-hierarchical semantics.} As mentioned in Section~\ref{sec:introduction}, semantic information is not hierarchically distributed from coarse to fine-grained in textual identifiers. 

    \item \textbf{Lack of collaborative signals.} Previous textual and codebook-based identifiers do not consider collaborative signals in the composition of the token sequence. They typically inject collaborative signals into the token embeddings of identifiers during the LLM training stage (\eg TIGER~\cite{rajput2023recommender} and LC-Rec~\cite{zheng2023adapting}). However, as illustrated in Figure~\ref{fig:cf_misalignment}, two items with similar semantics always share similar token sequences and embeddings, making it challenging to align with collaborative signals. Injecting collaborative signals into identifier embeddings of these two items by training will cause collisions. 

    \item \textbf{Code assignment bias.} Codebook-based methods additionally suffer from imbalanced code assignment, leading to item generation bias. We utilize TIGER~\cite{rajput2023recommender} on Instruments dataset for empirical investigation. Specifically, we split item identifier codes into five groups based on their assignment popularity in the training data and obtain the generation frequency of TIGER. From Figure~\ref{fig:Intro-distribution}, we can observe that popular item identifiers are more likely to be generated, amplifying the generation bias. 

\end{itemize}


In light of these, we consider three essential objectives to pursue an ideal identifier: 
1) hierarchical semantic integration with the token sequence, 
2) collaborative signals integration in the token sequence, and 
3) high diversity of the code assignment.


\section{Method}
\label{sec:method}


We elaborate on LETTER in Section~\ref{sec:tokenizer} and apply LETTER to LLM-based generative recommender models in Section~\ref{sec:instantiation}.

\subsection{LETTER}\label{sec:tokenizer}
To achieve the three objectives for item tokenization, we build LETTER for LLM-based generative recommendation. 
In particular, as depicted in Figure~\ref{fig:Main-Figure}, LETTER can generate 
identifiers, \ie code sequences, with hierarchical semantics. 
Moreover, 
we introduce collaborative regularization to endow the code sequence with collaborative signals, and diversity regularization to alleviate the code assignment bias issue during tokenizer training. 

\vspace{2pt}
\noindent$\bullet\quad$\textbf{Semantic regularization}. 
%
To pursue identifiers with hierarchical semantics,
 we develop LETTER based on RQ-VAE~\cite{lee2022autoregressive}, which encodes the hierarchical item semantics into a code sequence. 
As a multi-level embedding quantizer, RQ-VAE recursively quantizes semantic residuals using a fixed number of codebooks, thus naturally achieving  
identifier with coarse to fine-grained semantics. 
As shown in Figure~\ref{fig:Main-Figure}, the learnable tokenizer generates the hierarchical semantic identifier via two steps: 1) semantic embedding extraction, and 2) semantic embedding quantization. 

\textbf{\textit{Semantic embedding extraction}}. Given an item with its content information such as titles and descriptions, we first extract the semantic embedding $\bm{s}$ through a pre-trained semantic extractor, \eg LLaMA-7B~\cite{touvron2023llama}. 
The semantic embedding is then compressed into the latent semantic embedding $\bm{z}\in\mathbb{R}^{d}$ through an encoder $\bm{z} = \text{Encoder}(\bm{s})$.  

\textbf{\textit{Semantic embedding quantization}}. 
The latent semantic embedding $\bm{z}$ is then quantized into the code sequence through $L$-level codebooks, where $L$ is the identifier length. 
Specifically, for each code level $l\in\{1,\dots,L\}$, we have a codebook $\mathcal{C}_l=\{\bm{e}_i\}_{i=1}^N$, where $\bm{e}_i\in\mathbb{R}^{d}$ is a learnable code embedding and $N$ denotes the codebook size. 
Subsequently, the residual quantization can be formulated as 
\begin{equation}
\left\{
\begin{aligned}
    &c_l=\mathop{\arg\min}_{i}\|\bm{r}_{l-1}-\bm{e}_i\|^2, \quad \bm{e}_i\in\mathcal{C}_l, \\
    &\bm{r}_l=\bm{r}_{l-1}-\bm{e}_{c_l},
\end{aligned}
\right.
\end{equation}
where $c_l$ is the assigned code index from the $l$-th level codebook, $\bm{r}_{l-1}$ is the semantic residual from the last level, and we set $\bm{r}_0=\bm{z}$. 
Intuitively, at each code level, LETTER finds the most similar code embedding with the semantic residual and assigns the item with the corresponding code index. 
After the recursive quantization, we eventually obtain the quantized identifier $\Tilde{i}=[c_{1}, c_{2}, \dots, c_{L}]$, and the quantized embedding $\hat{\bm{z}}=\sum_{l=1}^{L}\bm{e}_{c_l}$. 
The quantized embedding $\hat{\bm{z}}$ is then decoded to the reconstructed semantic embedding $\hat{\bm{s}}$. 
The loss for semantic regularization is formulated as:
\begin{equation}
\left\{
\begin{aligned}
    &\mathcal{L}_{\text{Sem}}=\mathcal{L}_{\text{Recon}}+\mathcal{L}_{\text{RQ-VAE}}, \quad \text{where,} \\
    &\mathcal{L}_{\text{Recon}}=\|\bm{s}-\hat{\bm{s}}\|^{2}, \\
    &\mathcal{L}_{\text{RQ-VAE}}=\sum_{l=1}^{L}\|\text{sg}[\bm{r}_{l-1}]-\bm{e}_{c_l}\|^2 + \mu \|\bm{r}_{l-1}-\text{sg}[\bm{e}_{c_{l}}]\|^2,
\end{aligned}
\right.
\end{equation}
where $\text{sg}[\cdot]$ is the stop-gradient operation~\cite{van2017neural}, and $\mu$ is the coefficient to balance the strength between the optimization of code embeddings and encoder. 
Specifically, the reconstruction loss $\mathcal{L}_{\text{Recon}}$ aims to maintain the essential semantics information in the latent space such that the semantic embedding can be reconstructed from the quantized embedding. 
Besides, $\mathcal{L}_{\text{RQ-VAE}}$ reduces the residual error for all levels and jointly trains the encoder, and code embeddings~\cite{van2017neural}. 
By semantic regularization, the code sequence encodes hierarchical semantics, facilitating coarse-grained to fine-grained generation and cold-start generalization. 

\vspace{3pt}
\noindent$\bullet\quad$\textbf{Collaborative regularization}. 
%
%
To inject collaborative signals into the code sequence instead of only code embeddings, we introduce collaborative regularization to align the quantized embedding $\hat{\bm{z}}$ and the CF embedding via contrastive learning. 
Specifically, we utilize a well-trained CF recommender model (\eg SASRec~\cite{kang2018self} and LightGCN~\cite{he2020lightgcn}) to obtain CF embeddings of items, which is then aligned with the quantized embedding $\hat{\bm{z}}$ through a CF loss: 
\begin{equation}
    \mathcal{L}_{\text{CF}} = -\frac{1}{B}\sum_{i=1}^{B} \frac{\exp (<\hat{\bm{z}_i},\bm{h}_i>)}{\sum_{j=1}^{B} \exp (<\hat{\bm{z}}_i,\bm{h}_j>)},
\end{equation}
where $\bm{h}$ refers to the CF embedding, $<\cdot ,\cdot>$ denotes the inner product, and $B$ is the batch size. 

Intuitively, collaborative regularization encourages items with similar collaborative interactions to exhibit similar code sequences. 
In contrast to the typical codebook-based tokenization approaches such as TIGER~\cite{rajput2023recommender} that relies solely on semantics to generate identifiers, we inject collaborative signals into the code embeddings by optimizing quantized embeddings, thereby altering the code assignment to better align with the collaborative patterns (refer to Section~\ref{sec:exp_code_assignment_distribution} for empirical evidence).

\begin{figure}[t]
\setlength{\abovecaptionskip}{0cm}
\setlength{\belowcaptionskip}{-0.35cm}
\centering
\includegraphics[scale=0.68]{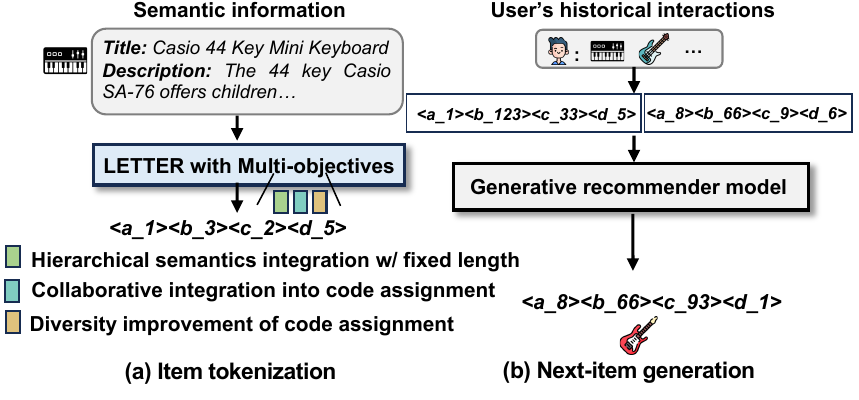}
\caption{Illustration of the code assignment with biased distribution and uniform distribution of code embeddings.}
\label{fig:diversity-method}
\end{figure}

\vspace{3pt}
\noindent$\bullet\quad$\textbf{Diversity regularization}.
To address the code generation bias, it is promising to learn a more uniform distribution of the code embeddings. 
As illustrated in Figure~\ref{fig:diversity-method}(a), a biased code embedding distribution is more likely to cause biased code assignment in the latent space, where some codes are assigned to more items. 
In contrast, a uniform code embedding distribution facilitates more balanced code assignment (Figure~\ref{fig:diversity-method}(b)). 
In light of these, we aim to improve the code embedding diversity for diverse code assignments. 

Specifically, for each codebook, we cluster the code embeddings into $K$ groups by constrained K-means~\cite{bradley2000constrained}. 
We then regularize the clustered code embeddings by a diversity loss, which is defined as
\begin{equation}
    \mathcal{L}_{\text{Div}}=-\frac{1}{B}\sum_{i=1}^{B}\frac{\exp (<\bm{e}_{c_l}^{i}, \bm{e}_{+}>)}{\sum_{j=1}^{N-1}\exp(<\bm{e}_{c_l}^{i},\bm{e}_j>)},
\end{equation}
where $\bm{e}_{c_l}^{i}$ is the nearest code embedding of item $i$, $\bm{e}_{+}$ denotes the code embedding of a randomly selected sample from the same cluster of code $c_l$, and $\bm{e}_{j\in{\{1, \dots, N\}}\setminus{c_{l}}}$ represents all code embeddings from the codebook except for $\bm{e}_{c_l}$. 
Intuitively,  as shown in Figure~\ref{fig:Main-Figure}, we pull the code embeddings from the same cluster closer and push the embeddings of codes from different clusters away from each other, thus encouraging diversity of the code embeddings and alleviating the code assignment bias issue (\cf Section~\ref{sec:exp_code_assignment_distribution} for empirical evidence). 

\vspace{3pt}
\noindent$\bullet\quad$\textbf{Overall loss}.
The training loss for LETTER is summarized as:
\begin{equation}\label{eq:tokenization_loss}
    \mathcal{L}_{\text{LETTER}}=\mathcal{L}_{\text{Sem}} + \alpha \mathcal{L}_{\text{CF}} + \beta \mathcal{L}_{\text{Div}}, 
\end{equation}
where $\alpha$ and $\beta$ are hyper-parameters to control the strength of collaborative regularization and diversity regularization, respectively. 

\subsection{Instantiation}\label{sec:instantiation}

To instantiate LETTER to LLM-based generative recommender models, we first train the tokenizer on the recommendation items via Eq. (\ref{eq:tokenization_loss}). 
And then the well-trained LETTER can be instantiated to tokenize items for the training and inference stages of LLMs. 

\subsubsection{\textbf{Training}}
The training of generative recommender models includes item tokenization and model optimization phases. 
Item tokenization index each item into the identifier $\Tilde{i}=[c_1, c_2, \dots, c_L]$ via the well-trained LETTER. 
We then translate the user interaction sequences based on item identifiers. Formally, we have $\mathcal{D}=\{(x,y)\}$, where $x=[\tilde{i}_1, \tilde{i}_2, \dots, \tilde{i}_M]$ denotes user's historically interacted items in chronological order, and $y=\tilde{i}_{M+1}$ refers to the identifier of user's next-interacted item. 

\vspace{2pt}
\noindent$\bullet\quad$\textbf{Ranking-guided generation loss}. 
Existing work usually optimizes LLMs on $\mathcal{D}=\{(x,y)\}$ by a generation loss for negative log-likelihood minimization. 
Despite the effectiveness of the generation loss in various domains, such generation loss might overlook the ranking optimization over all items, thus undermining the recommendation performance. 
In this light, we propose a ranking-guided generation loss, which alters the temperature in the traditional generalization loss to emphasize the penalty for hard-negative samples, thereby enhancing the ranking ability of generative recommender models. Formally, the loss is defined as:
\begin{equation}\label{eq:generative_loss}
\left\{
\begin{aligned}
   &\mathcal{L}_{\text{rank}}=-\sum_{t=1}^{|y|}\log P_{\theta}(y_t|y_{<t},x), \\
   &P_{\theta}(y_t|y_{<t},x)=\frac{\exp(p(y_t)/\tau)}{\sum_{v\in \mathcal{V}} \exp(p(v)/\tau)},
\end{aligned}
\right.
\end{equation}
where $\tau$ is the adjustable temperature to penalize the hard-negative samples, 
$y_t$ refers to the $t$-th token of $y$, $y_{<t}$ represents the token sequence preceding $y_t$, 
$\theta$ denotes the learnable parameters of the generative model, 
and $\mathcal{V}$ is the token vocabulary\footnote{The token vocabulary is extended by adding the codes from LETTER.}. 


\begin{proposition}
For a given ranking-guided generation loss $\mathcal{L}_{\text{rank}}$ and a parameter $\tau$, the following statements hold:
\begin{itemize}[leftmargin=*]
    \item Minimizing $\mathcal{L}_{\text{rank}}$ is equivalent to optimizing hard-negative items for users, where a smaller $\tau$ intensifies the penalty on hard negatives. 
    \item The minimization of $\mathcal{L}_{\text{rank}}$ is associated with the optimization of one-way partial AUC~\cite{shi2023theories}, which is strongly correlated with ranking metrics such as Recall and NDCG, ultimately leading to an improvement in the top-$K$ ranking ability.
\end{itemize}
\end{proposition}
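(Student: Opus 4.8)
The plan is to treat the two bullet points separately, since they require different machinery. For the first bullet, I would start from the closed-form expression for $P_\theta(y_t \mid y_{<t}, x)$ in Eq.~(\ref{eq:generative_loss}) and write $\mathcal{L}_{\text{rank}}$ explicitly as a sum of terms of the form $\log\!\big(\sum_{v\in\mathcal{V}} \exp((p(v)-p(y_t))/\tau)\big)$. The key observation is that this is (up to the factor $\tau$ and an additive constant) a LogSumExp over the score gaps $p(v)-p(y_t)$, i.e.\ a soft-max approximation of $\max_{v\neq y_t}\,(p(v)-p(y_t))$. I would then differentiate $\mathcal{L}_{\text{rank}}$ with respect to the model scores $p(v)$ and show that the gradient mass placed on a negative token $v$ is proportional to $\exp(p(v)/\tau) / \sum_{v'}\exp(p(v')/\tau)$, which concentrates sharply on the tokens with the largest scores — the hard negatives — as $\tau \to 0$. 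Quantifying this ``concentration'' rigorously is a short computation: bound the ratio of the weight on a hard negative to the weight on an easy negative by $\exp((p_{\text{hard}}-p_{\text{easy}})/\tau)$, which blows up as $\tau$ shrinks. This establishes both that minimizing $\mathcal{L}_{\text{rank}}$ amounts to suppressing the scores of hard-negative items and that smaller $\tau$ amplifies that effect.

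For the second bullet, I would connect $\mathcal{L}_{\text{rank}}$ to one-way partial AUC (OPAUC) by invoking the surrogate-loss viewpoint. First, I would note that for a fixed target item $y$, the event ``$y$ is ranked above negative item $v$'' corresponds to the margin $p(y)-p(v)>0$ (summed over the $|y|$ autoregressive steps, or argued per step). Partial AUC at a false-positive-rate cutoff restricts attention to the top-ranked negatives, i.e.\ exactly the hard negatives; the softmax weighting derived in the first part shows $\mathcal{L}_{\text{rank}}$ acts as a weighted sum of pairwise margin losses whose weights are concentrated on precisely that top region. I would then cite the characterization of OPAUC surrogates from~\cite{shi2023theories}: a loss of the form $\sum \phi\big(\text{top-weighted negative margins}\big)$ with convex nonincreasing $\phi$ (here $\phi$ is essentially the softplus arising from LogSumExp) is a consistent surrogate for OPAUC. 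Plugging our weighting scheme into their template gives the claimed association. Finally, to reach the statement about Recall and NDCG, I would invoke the known monotone relationship (again from~\cite{shi2023theories} or standard learning-to-rank results) that OPAUC lower-bounds, up to constants depending on $K$ and the number of items, the top-$K$ ranking quality; improving the surrogate therefore improves the bound on Recall@$K$ and NDCG@$K$.

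The main obstacle I expect is the second bullet's reduction from the \emph{autoregressive, token-level} loss to an \emph{item-level} ranking objective. The proposition implicitly treats $P_\theta(y \mid x)$ as a single item-level score, but $\mathcal{L}_{\text{rank}}$ is a sum over token positions with token-level vocabularies $\mathcal{V}$, and a negative \emph{item} is a sequence of negative codes rather than a single token. Bridging this gap cleanly requires either (i) an assumption that item identifiers share no prefixes, so that the token-level softmax at the final position factors through to an item-level softmax, or (ii) a telescoping argument showing the product of per-token conditionals bounds the item-level pairwise margin. I would make this prefix/decomposition assumption explicit and then the rest of the argument goes through; without it the link to OPAUC is only heuristic. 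A secondary, more minor difficulty is that the OPAUC cutoff is determined implicitly by $\tau$ rather than being a free parameter, so the correspondence is ``for each $\tau$ there exists a cutoff'' rather than an exact identity — I would state it in that existential form to keep the claim honest.
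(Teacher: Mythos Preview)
Your treatment of the first bullet is essentially what the paper does: compute the gradient of the per-step log-softmax, identify the negative-token weight $w(p(v),\tau)\propto\exp(p(v)/\tau)$, and argue this concentrates on high-score (hard) negatives; the paper additionally differentiates $w$ in $\tau$ to get the sign of $\partial_\tau w$, but your ratio bound achieves the same conclusion.

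For the second bullet your route diverges from the paper's. You propose to plug the softmax-weighted margins into a ``top-weighted pairwise surrogate'' template from~\cite{shi2023theories} and then cite a monotone bound to reach Recall/NDCG. The paper instead passes through a distributionally robust optimization (DRO) reformulation: it observes that the gradient in Part~1 coincides with the gradient of
\[
\mathcal{L}_{\text{DRO}}^{\rho,\tau}
=\tau\log\!\Big[\sum_{v\neq y_t}\exp\!\big((p(v)-p(y_t))/\tau\big)\Big]+\tau\rho,
\]
takes the infimum over $\tau$ to obtain $\mathcal{L}_{\text{DRO}}^{\rho}$, recognizes this as the KL-ball DRO dual $\min_\theta\max_{Q:\,D_{\mathrm{KL}}(Q\|P_0)\le\rho}\mathbb{E}_Q[L_\theta]$, and only then invokes Theorem~2 of~\cite{shi2023theories} to identify this with OPAUC$(e^{-\rho})$. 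The DRO bridge is the paper's key device and it also makes precise the point you flagged as a ``secondary difficulty'': the FPR cutoff is $\beta=e^{-\rho}$, determined by the robustness radius rather than chosen freely. Finally, the paper does not merely cite the OPAUC--metric link; it proves a short lemma that, with a single positive, the rank $r_p$ satisfies $r_p=(K-1)\big(1-\text{OPAUC}(K/(|\mathcal{V}|-1))\big)$ when $r_p<K$, from which explicit formulas for Recall@$K$ and NDCG@$K$ in terms of OPAUC drop out. Your approach may well go through if the surrogate template you have in mind appears in~\cite{shi2023theories}, but it is a different argument and omits the DRO step that carries most of the weight in the paper.

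On your main obstacle (token-level vs.\ item-level): the paper does not resolve this either. It explicitly fixes the first $t-1$ tokens and argues per-step, so the connection to item-level OPAUC is made at the token level. Your instinct to flag this is correct, but you need not introduce a prefix-disjointness assumption to match the paper's level of rigor; simply working conditionally per step, as the paper does, is what is expected here.
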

The proof of this proposition is provided in Appendix~\ref{app:proof}.

\subsubsection{\textbf{Inference}}
To generate the next item, generative recommender models autoregressively generate the code sequence, which is formulated as $\hat{y}_{t} = \mathop{\arg\max}_{v\in \mathcal{V}} P_{\theta}(v|\hat{y}_{<t},x)$. 
To ensure generating valid identifiers, we follow~\cite{hua2023index} to employ constrained generation~\cite{de2020autoregressive} via Trie, which is a prefix tree~\cite{cormen2022introduction} supporting the model to find all strictly valid successor tokens in autoregressive generation. 

\section{Experiments}
\label{sec:exp}
In this section, we conduct extensive experiments on three real-world datasets to answer the following research questions:
\begin{itemize}[leftmargin=*]
    \item \textbf{RQ1:} How does our proposed LETTER perform compared to different kinds of identifiers?
    \item \textbf{RQ2:} How do the components of LETTER (\eg collaborative regularization and diversity regularization) affect the performance? 
    \item \textbf{RQ3:} How does LETTER perform under different settings (\ie identifier length, codebook size, strength of collaborative and diversity regularization, and temperature)? 
\end{itemize}

\subsection{Experimental Settings}

\subsubsection{\textbf{Datasets}} 
We use three real-world recommendation datasets from different domains: 
1) \textbf{Instruments} is from the Amazon review datasets~\cite{ni2019justifying}, which contains user interactions with rich music gears. 
2) \textbf{Beauty} contains user interactions with extensive beauty products from Amazon review datasets~\cite{ni2019justifying}. 
3) \textbf{Yelp}\footnote{\url{https://www.yelp.com/dataset}.} is a popular dataset comprising business interactions on the Yelp platform. 
We adopt the pre-processing techniques from previous work~\cite{kang2018self,rajput2023recommender}, discarding sparse users and items with interactions less than 5. 
We consider sequential recommendation setting\footnote{We focus on sequential recommendation setting due to its high practicality and significant potential in real-world applications.} and use \textit{leave-one-out} strategy~\cite{rajput2023recommender, zheng2023adapting} to split datasets. 
For training, we follow~\cite{hua2023index, zheng2023adapting} to restrict the number of items in a user's history to 20. 

\subsubsection{\textbf{Baselines}}
\
\
We compare LETTER with competitive baselines within two groups of work, traditional recommender models and LLM-based generative recommender models, with different types of identifiers (\ie ID, textual, and codebook-based identifiers): 1) \textbf{MF}~\cite{rendle2012bpr} decomposes the user-item interactions into the user embeddings and the item embeddings in the latent space. 
2) \textbf{Caser}~\cite{tang2018personalized} employs convolutional nerual networks to capture user's spatial and positional information.
3) \textbf{HGN}~\cite{ma2019hierarchical} utilizes graph neural networks to learn user and item representations for predicting user-item interactions.
4) \textbf{BERT4Rec}~\cite{sun2019bert4rec} leverages BERT's pre-trained language representations to capture semantic user-item relationships.
5) \textbf{LightGCN}~\cite{he2020lightgcn} is a lightweight graph convolutional network model focusing on high-order connections between users and items.
6) \textbf{SASRec}~\cite{kang2018self} employs self-attention mechanisms to capture long-term dependencies in user interaction history.
7) \textbf{BIGRec}~\cite{bao2023bi} is an LLM-based generative recommender model using textual identifiers, with each item represented by its title. 
8) \textbf{P5-TID}~\cite{hua2023index} uses items' title as textual identifiers for LLM-based generative recommender model.
9) \textbf{P5-SemID}~\cite{hua2023index} leverages item metadata (\eg attributes) to construct ID identifiers for LLM-based generative recommender model.
10) \textbf{P5-CID}~\cite{hua2023index} incorporates collaborative signals into the identifier for LLM-based generative recommender models through a spectral clustering tree derived from item co-appearance graphs.
11) \textbf{TIGER}~\cite{rajput2023recommender} introduces codebook-based identifiers via RQ-VAE, which quantizes semantic information into code sequence for LLM-based generative recommendation.  
12) \textbf{LC-Rec}~\cite{zheng2023adapting} uses codebook-based identifiers and auxiliary alignment tasks to better utilize knowledge in LLMs by connecting generated code sequences with natural language.

\vspace{2pt}
\noindent$\bullet\quad$\textbf{Evaluation settings.}
\
We use two metrics: top-$K$ Recall (R@$K$) and NDCG (N@$K$) with $K$ = 5 and 10, following~\cite{rajput2023recommender}.

\subsubsection{\textbf{Implementation Details}} 


We instantiate LETTER on two representative LLM-based generative recommender models, \ie TIGER~\cite{rajput2023recommender} and LC-Rec~\cite{zheng2023adapting}. 
As for TIGER, since the official implementations have not been released, we follow the paper for implementation. 
For LC-Rec, we perform the parameter-efficient fine-tuning technique 
LoRA~\cite{hu2021lora} to fine-tune LLaMA-7B~\cite{touvron2023llama}. 
To obtain item semantic embedding, we follow~\cite{zheng2023adapting} to adopt LLaMA-7B for encoding item content information. 
We obtain 32-dimensional item embedding from SASRec~\cite{kang2018self} for collaborative regularization and set cluster number $K$ at 10 for diversity regularization. 
All the experiments are conducted on 4 NVIDIA RTX A5000 GPUs.

For item tokenization, we use 4-level codebooks for RQ-VAE, where each codebook comprises 256 code embeddings with a dimension of 32. 
We train LETTER for 20k epochs via AdamW~\cite{loshchilov2017decoupled} optimizer with a learning rate of 1e-3 and a batch size of $1,024$. We follow~\cite{rajput2023recommender} to set $\mu$ as 0.25, and search $\alpha$ in a range of \{1e-1, 2e-2, 1e-2,1e-3\}, $\beta$ in a range of \{1e-2, 1e-3, 1e-4,1e-5\}. 
After LETTER training, we fine-tune TIGER and LC-Rec for convergence based on the validation performance with a learning rate in \{1e-3, 5e-4\} and \{1e-4, 2e-4, 3e-4\}. 



\begin{table*}[t]
\setlength{\abovecaptionskip}{0.05cm}
\setlength{\belowcaptionskip}{0.2cm}
\caption{Overall performance comparison between the baselines and LETTER instantiated on two competitive LLM-based generative recommender models on three datasets. The bold results highlight the better performance in comparing the backend models with and without LETTER.}
\setlength{\tabcolsep}{3mm}{
\resizebox{\textwidth}{!}{
\begin{tabular}{c|cccc|cccc|cccc}
\toprule
 &
  \multicolumn{4}{c|}{\multirow{-1}{*}{\textbf{Instruments}}} &
  \multicolumn{4}{c|}{\multirow{-1}{*}{\textbf{Beauty}}} &
  \multicolumn{4}{c}{\multirow{-1}{*}{\textbf{Yelp}}} \\
\multicolumn{1}{c|}{\textbf{Model}} &
  \multicolumn{1}{c}{\textbf{R@5}} &
  \multicolumn{1}{c}{\textbf{R@10}} &
  \multicolumn{1}{c}{\cellcolor[HTML]{FFFFFF}\textbf{N@5}} &
  \multicolumn{1}{c|}{\cellcolor[HTML]{FFFFFF}\textbf{N@10}} &
  \multicolumn{1}{c}{\textbf{R@5}} &
  \multicolumn{1}{c}{\textbf{R@10}} &
  \multicolumn{1}{c}{\cellcolor[HTML]{FFFFFF}\textbf{N@5}} &
  \multicolumn{1}{c|}{\cellcolor[HTML]{FFFFFF}\textbf{N@10}} &
  \multicolumn{1}{c}{\textbf{R@5}} &
  \multicolumn{1}{c}{\textbf{R@10}} &
  \multicolumn{1}{c}{\textbf{N@5}} &
  \multicolumn{1}{c}{\textbf{N@10}} \\   \midrule\midrule
\textbf{MF} &
  0.0479 &
  0.0735 &
  0.0330 &
  0.0412 &
  0.0294 &
  0.0474 &
  0.0145 &
  0.0191 &
  0.0220 &
  0.0381 &
  0.0138 &
  0.0190 \\
\textbf{Caser} &
  0.0543 &
  0.0710 &
  0.0355 &
  0.0409 &
  0.0205 &
  0.0347 &
  0.0131 &
  0.0176 &
  0.0150 &
  0.0263 &
  0.0099 &
  0.0134 \\
\textbf{HGN} &
  0.0813 &
  0.1048 &
  0.0668 &
  0.0774 &
  0.0325 &
  0.0512 &
  0.0206 &
  0.0266 &
  0.0186 &
  0.0326 &
  0.0115 &
  0.0159 \\
\textbf{Bert4Rec} &
  0.0671 &
  0.0822 &
  0.0560 &
  0.0608 &
  0.0203 &
  0.0347 &
  0.0124 &
  0.0170 &
  0.0186 &
  0.0291 &
  0.0115 &
  0.0159 \\
\textbf{LightGCN} &
  0.0794 &
  0.0100 &
  0.0662 &
  0.0728 &
  0.0305 &
  0.0511 &
  0.0194 &
  0.0260 &
  0.0248 &
  0.0407 &
  0.0156 &
  0.0207 \\
\textbf{SASRec} &
  0.0751 &
  0.0947 &
  0.0627 &
  0.0690 &
  0.0380 &
  0.0588 &
  0.0246 &
  0.0313 &
  0.0183 &
  0.0296 &
  0.0116 &
  0.0152 \\
\textbf{BigRec} &
  0.0513 &
  0.0576 &
  0.0470 &
  0.0491 &
  0.0243 &
  0.0299 &
  0.0181 &
  0.0198 &
  0.0154 &
0.0169 &
0.0137 &
0.0142 \\
\textbf{P5-TID} &
0.0000 &
  0.0001 &
  0.0000 &
  0.0000 &
0.0182 &
0.0432 &
0.0132 &
0.0254 &
  0.0184 &
0.0263 &
0.0130 &
0.0155 \\
\textbf{P5-SemID} &
0.0775 &
  0.0964 &
  0.0669 &
  0.0730 &
0.0393 &
0.0584 &
0.0273 &
0.0335 &
  0.0202 &
0.0324 &
0.0131 &
0.0170 \\
\textbf{P5-CID} &
  0.0809 &
  0.0987 &
  0.0695 &
  0.0751 &
0.0404 &
0.0597 &
0.0284 &
0.0347 &
  0.0219 &
0.0347 &
0.0140 &
0.0181 \\ \midrule
\textbf{TIGER} &
0.0870 &
0.1058 &
0.0737 &
0.0797 &
  0.0395 &
  0.0610 &
  0.0262 &
  0.0331 &
  0.0253 &
0.0407 &
0.0164 &
0.0213 \\
\textbf{LETTER-TIGER} &
\textbf{0.0909} &
\textbf{0.1122} &
\textbf{0.0763} &
\textbf{0.0831} &
\textbf{0.0431} &
  \textbf{0.0672} &
  \textbf{0.0286} &
  \textbf{0.0364} &
  \textbf{0.0277} &
\textbf{0.0426} &
\textbf{0.0184} &
\textbf{0.0231} \\ \midrule
\textbf{LC-Rec} &
0.0824 &
0.1006 &
0.0712 &
  0.0772 &
  0.0443 &
  0.0642 &
  0.0311 &
  0.0374 &
  0.0230 &
  0.0359 &
  0.0158 &
  0.0199 \\
\textbf{LETTER-LC-Rec} &
  \textbf{0.0913} &
  \textbf{0.1115} &
  \textbf{0.0789} &
  \textbf{0.0854} &
  \textbf{0.0505} &
  \textbf{0.0703} &
  \textbf{0.0355} &
  \textbf{0.0418} &
  \textbf{0.0255} &
  \textbf{0.0393} &
  \textbf{0.0168} &
  \textbf{0.0211} \\ \bottomrule
\end{tabular}
}}
\label{tab:MainTable}
\end{table*}

\subsection{Overall Performance (RQ1)}


We evaluate LETTER on two SOTA LLM-based generative recommender models. The performances of LETTER instantiated on TIGER (LETTER-TIGER) and LC-REC (LETTER-LC-Rec) are reported in Table~\ref{tab:MainTable}, from which we have the following observations:

\begin{itemize}[leftmargin=*]

    \item Among the LLM-based models with ID identifiers (P5-CID and P5-SemID), P5-CID usually yields better performance than P5-SemID over the three datasets. 
    This is because 
    P5-SemID assigns item identifiers based on the item categories, \eg ``string'' and ``keyboard'' for instrument products, which might 1) fail to capture fine-grained semantic information and 2) hinder the learning of collaborative behavior in LLMs' fine-tuning stage due to the misalignment between semantic and collaborative signals (\cf Figure~\ref{fig:cf_misalignment} in Section~\ref{sec:task}). 
    In contrast, P5-CID leverages collaborative signals from the item co-appearance graph to assign identifiers, which is beneficial for LLM-based recommender models to capture collaborative patterns in user behaviors. 


    \item Among all LLM-based baselines with different kinds of identifiers, models with codebook-based identifiers (TIGER and LC-Rec) outperform models with ID identifiers (P5-SemID and P5-CID) and that with textual identifiers (BIGRec and P5-TID) in most cases. 
    This is attributed to the incorporation of hierarchical semantics with different granularities via RQ-VAE, which distinguishes between similar items more effectively by grasping fine-grained details. 
    Meanwhile, the inferior performance of BIGRec and P5-TID with textual identifiers is probably due to the potential misalignment between items' similar semantics and dissimilar interactions, thus hurting the learning of collaborative signals for accurate recommendation. 
    
    \item LETTER consistently exhibits significant improvements for the backend models (TIGER and LC-Rec) across three datasets, validating the effectiveness of our method. 
    The superior performance is attributed to: 
    1) the CF integration during code assignment, which aligns the collaborative signals and the semantic embeddings, thereby encouraging items with similar interactions or semantics to have similar code sequence and addressing the misalignment issue between semantics and collaborative signals (refer to Section~\ref{sec:exp_cf_information} for further analysis of CF integration). 
    2) The improved diversity of token assignments, which is achieved by regularizing the representation space of code embeddings, thereby alleviating item generation bias caused by code assignment bias (refer to Section~\ref{sec:exp_code_assignment_distribution} for detailed analysis). 
\end{itemize}

\subsection{In-depth Analysis}

\subsubsection{\textbf{Ablation Study (RQ2)}}

To thoroughly investigate the effect of each regularization, we compare the following five variants of LETTER on TIGER: 
(0) we solely employ semantic regularization for item tokenization, which is equivalent to TIGER. 
(1) We incorporate both semantic and collaborative regularization for item tokenization, denoted as ``TIGER w/ c. r.''. 
(2) We involve both semantic and diversity regularization for item tokenization, referred to as ``TIGER w/ d. r.''.  
(3) We utilize semantic, collaborative, and diversity regularization for item tokenization and train TIGER with original generation loss, denoted as ``(1) w/ d. r.''.  
(4) We employ all regularizations for item tokenization and apply ranking-guided generation loss, \ie LETTER-TIGER. 

\begin{table}[]
\setlength{\abovecaptionskip}{0cm}
\setlength{\belowcaptionskip}{0cm}
\caption{Ablation study of LETTER-TIGER.}
\setlength{\tabcolsep}{3.8mm}{
\resizebox{0.48\textwidth}{!}{
\begin{tabular}{l|cc|cc}
\toprule
\multicolumn{1}{c|}{\textbf{}} & \multicolumn{2}{c|}{\textbf{Instruments}}                       & \multicolumn{2}{c}{\textbf{Beauty}}                             \\
\multicolumn{1}{c|}{\textbf{Variants}} & \textbf{R@10} & \cellcolor[HTML]{FFFFFF}\textbf{N@10} & \textbf{R@10} & \textbf{N@10} \\ \hline
\textbf{(0): TIGER}            & 0.1058                          & 0.0797                         & 0.0610                         & 0.0331                         \\
\textbf{(1): TIGER w/ c. r.}    & 0.1078                         & 0.0810                         & 0.0660                          & 0.0351                         \\
\textbf{(2): TIGER w/ d. r.}   & 0.1075                         & 0.0809                         & 0.0618 & 0.0335 \\
\textbf{(3): (1) w/ d. r.}     & 0.1092 & 0.0819 & 0.0672                         & 0.0357                         \\
\textbf{(4): LETTER-TIGER}     & \textbf{0.1122}                & \textbf{0.0831}                & \textbf{0.0672}                & \textbf{0.0364}                \\ \hline
\end{tabular}
}}
\label{tab:ablation}
\vspace{-0.3cm}
\end{table}

Results on Instruments and Beauty are depicted in Table~\ref{tab:ablation}. 
We omit the results on Yelp with similar observations to save space. 
From the table, we have several key observations: 
1) incorporating either collaborative or diversity regularization can improve the performance of TIGER, which validates the effectiveness of injecting collaborative signals into code assignment and improving the diversity of code embeddings, respectively. 
2) Optimizing item tokenization with all three regularizations will further improve the performance (superior performance of (3) than (0-2)), indicating the effectiveness of considering both semantics and collaborative information in the code assignment with improved diversity. 
3) LETTER-TIGER achieves the best performance, which leverages ranking-guided generation loss. This shows the effectiveness of suppressing hard-negative samples by altering temperature to strengthen the ranking ability. 


\subsubsection{\textbf{Code Assignment Distribution (RQ2)}}\label{sec:exp_code_assignment_distribution} 

\begin{figure*}[t]
\setlength{\abovecaptionskip}{0.05cm}
\setlength{\belowcaptionskip}{-0.2cm}
\centering
\includegraphics[scale=0.5]{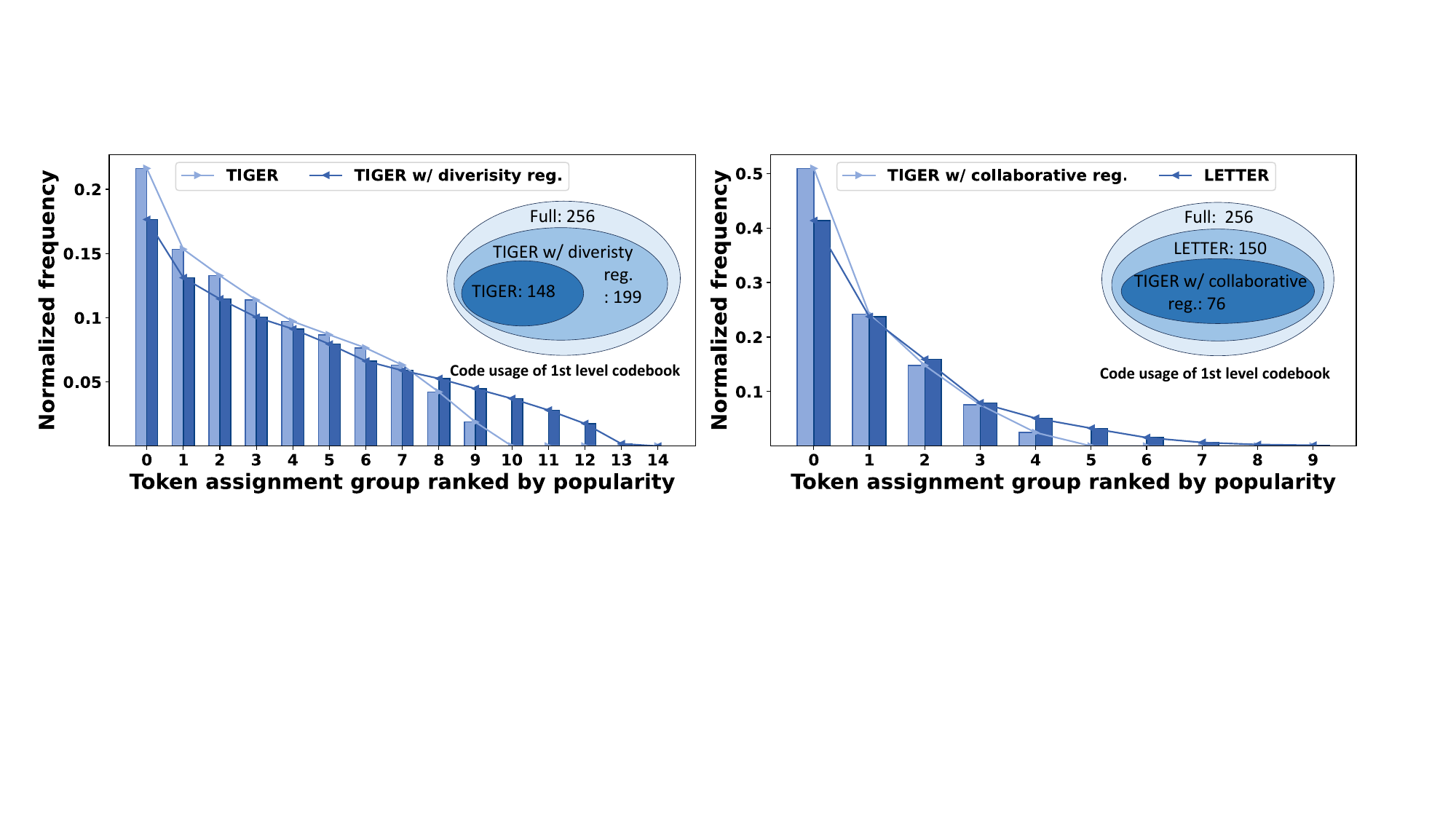}
\caption{Illustration of smoother code assignment distribution on Instruments. 
The left figure compares the code assignment between TIGER and TIGER with diversity regularization and the right figure compares between TIGER with collaborative regularization and LETTER. ``reg'' denotes ``regularization''.}
\label{fig:exp-code-assignment-distribution}
\end{figure*}



We further investigate whether diversity regularization can effectively mitigate the code assignment bias in item tokenization. 
We compare the distributions of the first code of identifier tokenized by 1) TIGER with (w/) and without diversity regularization (left figure in Figure~\ref{fig:exp-code-assignment-distribution}), and 
2) TIGER equipped with collaborative regularization with and without diversity regularization (right figure in Figure~\ref{fig:exp-code-assignment-distribution}), respectively. 
Specifically, for each tokenization approach, we first tokenize items with the well-trained tokenizer. We then sort the first code of the identifier according to the frequency and divide the first codes into groups with each of the groups containing 15 codes. 

From Figure~\ref{fig:exp-code-assignment-distribution}, we observe that: 
1) incorporating diversity regularization effectively promotes a more uniform distribution of the first code, thus alleviating code assignment bias and potentially mitigating the item generation bias of generative recommender models. 
2) The incorporation of diversity regularization significantly raises the utilization rate of codes in the first-level codebook, thereby improving diversity in code assignment. 
3) Although adding collaborative regularization decreases the code utilization of TIGER (from 148 to 76), integrating diversity regularization inversely compensates for the drop in code utilization. 
As such, LETTER can capture collaborative signals and maintain a high code utilization, simultaneously meeting the multiple criteria of an ideal identifier.



\subsubsection{\textbf{Code Embedding Distribution (RQ2)}}\label{sec:code_emb_distribution}

\begin{figure}[t]
\setlength{\abovecaptionskip}{0.05cm}
\setlength{\belowcaptionskip}{-0.1cm}
\centering
\includegraphics[scale=0.34]{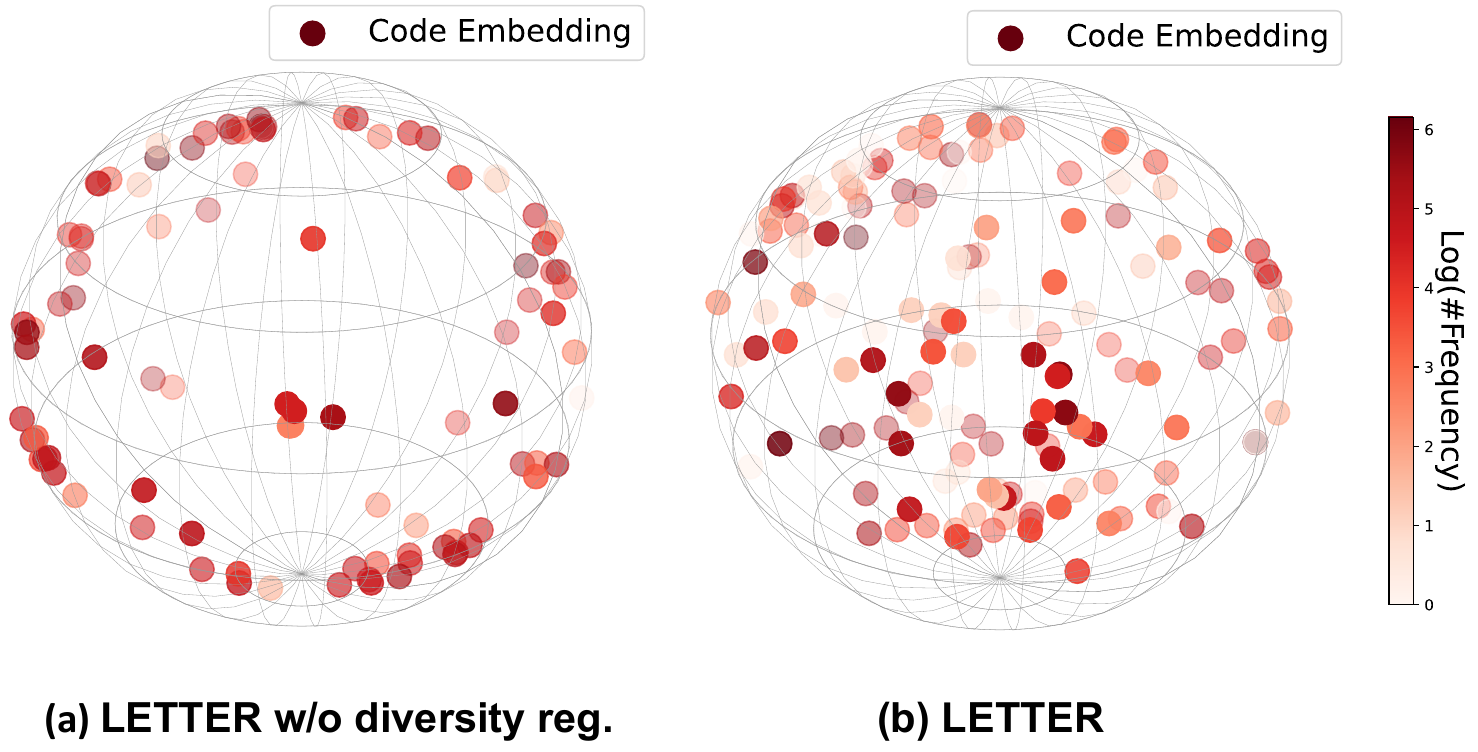}
\caption{Code embedding distribution of LETTER with and without diversity regularization on Instruments.}
\label{fig:exp-code-embedding-distribution}
\vspace{-0.3cm}
\end{figure}


To analyze whether diversity regularization can mitigate the biased distribution of code embeddings, we visualize the code embeddings of LETTER and LETTER without (w/o) diversity regularization. 
we first obtain the code embeddings from the first-level codebook and then perform PCA to reduce the high-dimensional embeddings into the 3-dimensional space for visualization. 
As shown in Figure~\ref{fig:exp-code-embedding-distribution}, we present the 3-dimensional code embeddings in a spherical plot, where each point represents a code embedding vector, and darker colors indicate that the code is assigned to more items. 
By comparing (a) LETTER w/o diversity regularization and (b) LETTER, we can find that the first code embeddings of LETTER are more evenly distributed in the embedding representation space compared to LETTER w/o diversity regularization. 
This validates the effectiveness of diversity regularization to achieve a more diverse distribution of code embeddings in the representation space, fundamentally alleviating the problem of biased code assignment in Figure~\ref{fig:diversity-method}(a).

\subsubsection{\textbf{Investigation on Collaborative Signals in Identifiers (RQ2).}} \label{sec:exp_cf_information}




\begin{table}[]
\setlength{\abovecaptionskip}{0cm}
\setlength{\belowcaptionskip}{-0.5cm}
\caption{Ranking performance with quantized embeddings.}
\setlength{\tabcolsep}{3mm}{
\resizebox{0.47\textwidth}{!}{
\begin{tabular}{cc|cccc}
\toprule
\textbf{Dataset} &
  \textbf{Model} &
  \multicolumn{1}{c}{\textbf{R@5}} &
  \multicolumn{1}{c}{\textbf{R@10}} &
  \multicolumn{1}{c}{\textbf{N@5}} &
  \multicolumn{1}{c}{\textbf{N@10}} \\ \midrule
                                  & \textbf{TIGER}     & 0.0050          & 0.0150          & 0.0024          & 0.0049          \\ 
                                  
\multirow{-2}{*}{\textbf{Instruments}} & \textbf{LETTER}   &
  \textbf{0.0080} &
  \textbf{0.0159} &
  \textbf{0.0038} &
  \textbf{0.0058} \\         \hline            
\multicolumn{1}{l}{}              &  \textbf{TIGER}                    & 0.0128          & 0.0213          & 0.0064          & 0.0085          \\ 
\multicolumn{1}{c}{\multirow{-2}{*}{\textbf{Beauty}}} &
\textbf{LETTER} & \textbf{0.0175} & \textbf{0.0343} & \textbf{0.0076} & \textbf{0.0118} \\  \bottomrule
\end{tabular}
}}
\label{tab:exp-ranking}
\end{table}


\begin{table}[]
\setlength{\abovecaptionskip}{0cm}
\setlength{\belowcaptionskip}{-0.2cm}
\caption{Code similarity between items with similar collaborative signals.}
\setlength{\tabcolsep}{10mm}{
\resizebox{0.45\textwidth}{!}{
\begin{tabular}{c|clllclll}
\toprule
      & \multicolumn{4}{l}{\textbf{Instruments}}  & \multicolumn{4}{l}{\textbf{Beauty}} \\ \midrule
\textbf{TIGER} & \multicolumn{4}{c}{0.0849}          & \multicolumn{4}{c}{0.1135}    \\
\textbf{LETTER}  & \multicolumn{4}{c}{\textbf{0.2760}} & \multicolumn{4}{c}{\textbf{0.3312}}    \\ \bottomrule
\end{tabular}
}}
\label{tab:exp-similarity}
\vspace{-0.3cm}
\end{table}

To verify whether LETTER encodes collaborative signals into identifiers as expected, we design two experiments for analysis: 

\noindent$\bullet\quad$\textbf{Ranking experiment}. 
We aim to assess the ranking performance of LETTER by utilizing the quantized embeddings of items for interaction prediction. 
Specifically, we first obtain the item's quantized embedding $\hat{\bm{z}}$ from the well-trained tokenizer. 
We then replace the item embeddings of the well-trained traditional CF model (\ie SASRec) with the quantized embeddings for interaction prediction. 
Intuitively, identifiers that effectively capture collaborative signals will lead to a better ranking performance. 
From Table~\ref{tab:exp-ranking}, we can observe that LETTER outperforms TIGER by a large margin, indicating the effective incorporation of collaborative signals. 

\begin{figure*}[t]
\setlength{\abovecaptionskip}{0cm}
\setlength{\belowcaptionskip}{-0.2cm}
\centering
\includegraphics[scale=0.58]{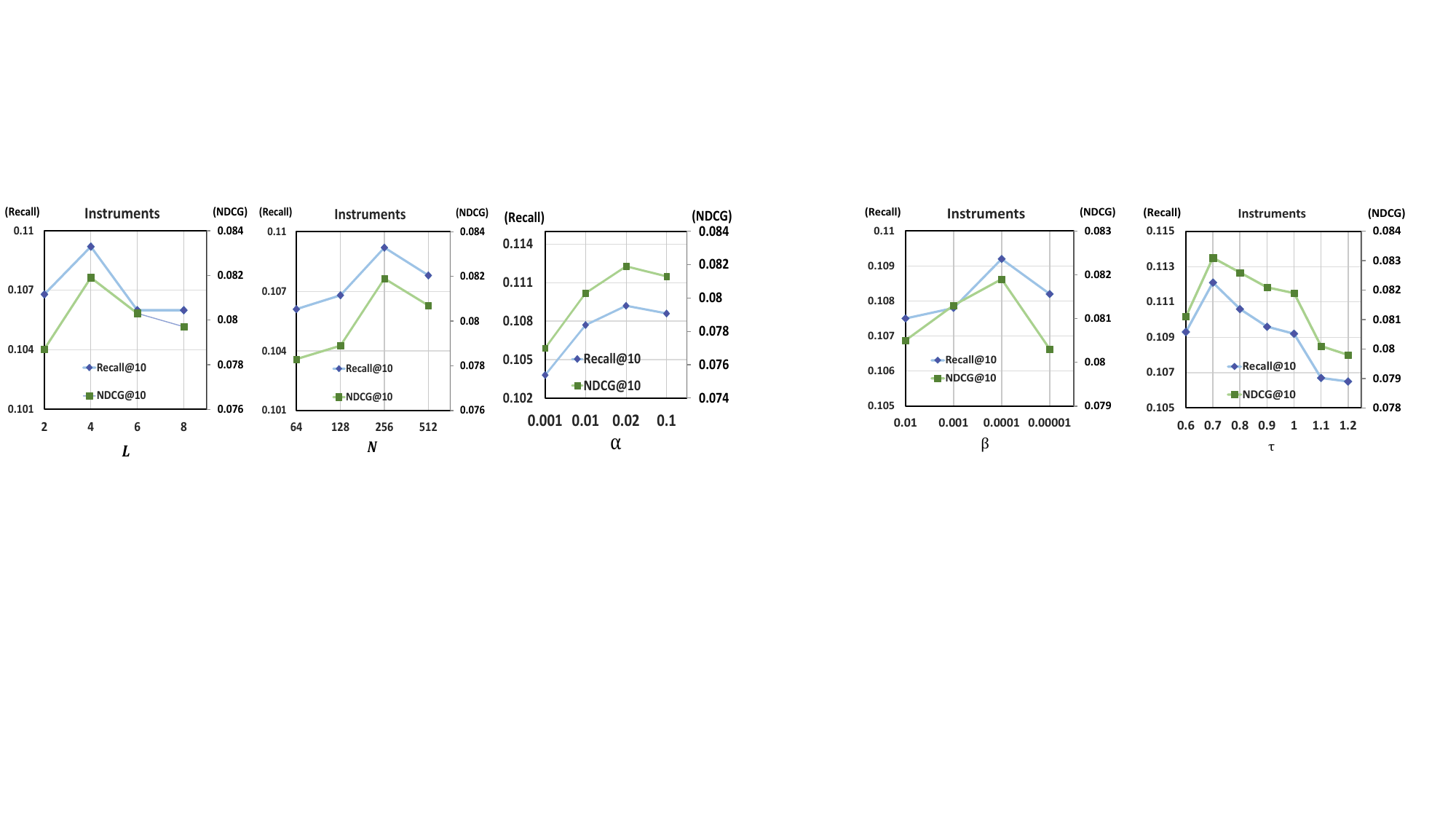}
\caption{Performance of LETTER-TIGER over different hyper-parameters on Instruments. }
\label{fig:exp-hyper-parameters}
\end{figure*}

\vspace{1pt}
\noindent$\bullet\quad$\textbf{Similarity experiment}. 
Our target is to verify whether the items with similar collaborative signals exhibit similar identifiers. 
We first pair every item with its most similar item based on the similarity from pre-trained CF embeddings. 
Next, we assess the similarity of the code sequence between the two items within the item pairs. 
Specifically, we measure the overlap degree between the code sequences of the two items and report the averaged results of all items in Table~\ref{tab:exp-similarity}. 
We can find that LETTER achieves more similar code sequences for the items with similar collaborative signals than TIGER, thereby alleviating the misalignment issue between semantics and the collaborative similarity. 



\subsubsection{\textbf{Hyper-Parameter Analysis (RQ3)}} 
We further investigate several important hyper-parameters introduced in LETTER to facilitate future applications. 

\noindent$\bullet\quad$\textbf{Identifier length $L$.} 
We vary $L$ from 2 to 8 and present the results in Figure~\ref{fig:exp-hyper-parameters}. 
It is observed that 
1) the performance improves as we increase $L$ from 2 to 4. 
Because overly short identifiers may lose some fine-grained information, resulting in weaker expressiveness of idetentifiers. 
2) Continuously increasing the identifier length from 4 to 8 will inversely degrade the performance. This is reasonable because autoregressive generation suffers from error accumulation~\cite{ren2024non}. 
Since accurate item generation requires the correctness of all generated codes in the sequence, generating longer identifiers accurately is more challenging than shorter identifiers. 

\noindent$\bullet\quad$\textbf{Codebook size $N$.}
We evaluate LETTER with different codebook sizes $N=$ 64, 128, 256, and 512. 
Results in Figure~\ref{fig:exp-hyper-parameters} show that 
1) Gradually increasing $N$ tends to perform better. 
The worse performance of small-sized codebooks may be caused by the limited diversity of code selection, thus failing to distinguish items effectively. 
However, 2) blindly expanding $N$ might inversely hurt the performance, 
where larger codebook size might be more susceptible to noise in the item's semantic information, potentially leading to the over-fitting of some meaningless semantics. 


\noindent$\bullet\quad$\textbf{Strength of collaborative regularization $\alpha$.}
We vary $\alpha$ from 0.001 to 0.1 and present the results in Figure~\ref{fig:exp-hyper-parameters}. We can find that 
1) as $\alpha$ continuously increases, the overall trend of performance is observed to be generally improved.
This is reasonable because larger $\alpha$ implies a stronger injection of collaborative patterns, but an overly large $\alpha$ may instead interfere with semantic regularization.
2) Empirically, we recommend setting $\alpha=0.02$ because it probably achieves a proper balance between semantic and collaborative regularization, thus leading to the best performance.

\noindent$\bullet\quad$\textbf{Strength of diversity regularization $\beta$.}
We examine the influence of different diversity regularization strengths by adjusting $\beta$ from 0.00001 to 0.01. 
The results are in Figure~\ref{fig:exp-hyper-parameters} and we observe that 
applying diversity regularization with a slight strength is sufficient to enhance the diversity of code assignment (significant improvements from $\beta=0.00001$ to $\beta=0.0001$). 
In contrast, an excessive amount of diversity signal may interfere with the tokenizer's integration of semantic and collaborative signals. 

\noindent$\bullet\quad$\textbf{Cluster $K$.} 
Based on the results with various values of $K$ ranging from 5 to 20, as illustrated in Figure~\ref{fig:exp-hyper-parameters}, 
we can observe that decreasing or increasing $K$ from 10, the performance tends to decline. 
This is understandable because overly large clusters contain too many code embeddings, causing embeddings within the same cluster to be insufficiently close, whereas overly small clusters have relatively few code embeddings, leading to embeddings within the same cluster being overly close.

\noindent$\bullet\quad$\textbf{Temperature $\tau$.} 
From the results of LETTER-TIGER with multiple values of $\tau$ ranging from 0.6 to 1.2 as shown in Figure~\ref{fig:exp-hyper-parameters}, 
we can find that 
decreasing $\tau$ from 1.2 to 0.7, the performance tends to decline. 
This is reasonable as decreasing temperature casts more emphasis on the penalty for hard negatives, where the ranking ability is strengthened, thus improving the recommendation performance. 
Nevertheless, we should carefully choose $\tau$ since a too-small $\tau$ might suppress the possibility of hard negative samples being considered as positive samples for other users.

\section{Related Work}
\label{sec:related_work}












$\bullet\quad$\textbf{LLMs for Generative Recommendation.}
LLMs have shown promising prospects for generative recommendation~\cite{li2024survey, lin2024bridging, gong2023unified,lin2024data,Sun2023LearningTT}, where item tokenization is a crucial problem. 
Existing work on item tokenization primarily investigate three types of identifiers for item indexing: ID identifiers~\cite{hua2023index, geng2022recommendation, wang2024enhanced,chu2023leveraging}, textual identifiers~\cite{bao2023bi,zhang2021language,zhang2023recommendation,liao2023llara, dai2023uncovering,li2023generative}, codebook-based identifiers~\cite{wang2024enhanced,rajput2023recommender,zheng2023adapting}. 
In early stages, randomly assigned IDs struggled to effectively encode semantic information and collaborative patterns of items.
So ID identifiers such as SemID~\cite{hua2023index} and CID~\cite{hua2023index} use item semantic information and collaborative signals, respectively, to build tree-like structures for generating identifiers. However, such a fixed and unlearnable structure is hard to efficiently and effectively represent item similarity and adapt new items.
Alternatively, textual identifiers leverage detailed descriptive information of items as identifiers. Nevertheless, these methods struggle to hierarchically encode semantic information or integrate collaborative signals.
Codebook-based identifiers leverage codebooks and attempt to integrate semantic information and collaborative signals during training process. However, they overlook the misalignment introduced by incorporating collaborative signals into fixed code sequences, as well as the code assignment bias.
Specifically, LC-Rec uses the Sinkhorn-Knopp Algorithm to fairly consider intra-layer codes through iterative normalization, missing the essence of code assignment: the distribution of code embeddings.
In this work, we propose a superior identifier that integrates hierarchical semantic with collaborative signals and code assignment diversity.

\vspace{3pt}
\noindent$\bullet\quad$\textbf{LLMs for Discriminative Recommendation.}
LLMs are used not only for generating recommendation but also extensively in different discriminative recommendation tasks. Two main approaches are employed in utilizing LLMs for discriminative recommendation. 1) LLM-enhanced traditional recommender. Many studies leverage LLMs for data augmentation~\cite{wei2024llmrec, xi2023towards,liu2024once} and representation~\cite{qiu2021u,wu2022userbert,ren2023representation}. 
2) LLM-based recommender, which utilizes LLMs as recommender models directly. 
Some methods~\cite{li2023e4srec, zhang2023collm,lin2023clickprompt} incorporate a projection layer to calculate the matching score between users and items for ranking. 
Besides, another line~\cite{bao2023tallrec,lin2023rella,prakash2023cr} leverages LLMs for click-through rate (CTR) prediction, combining user and item features to determine whether the user will like this item. 
In this work, we mainly focus on the identifier design for generative recommendation, and thus we do not delve into in-depth discussion and comparison with discriminative methods. 
\section{Conclusion and Future Work}
\label{sec:conclusion}

In this study, we undertook a thorough analysis of the optimal features required for item tokenization in generative recommendation. Subsequently, we introduced LETTER, a learnable tokenizer incorporating three forms of regularization aimed at capturing hierarchical semantics, collaborative signals, and code assignment diversity within item identifiers. We applied LETTER to two prominent generative recommender models and employed a ranking-guided generation loss to enhance their ranking performance. Extensive experiments substantiate the superiority of LETTER in item tokenization for generative recommendation models. 

This work highlights the importance of item tokenization on generative recommendation, leaving some promising direction for future exploration.
1) Tokenization with rich user behaviors to enable generative recommender models to infer user preference from multiple user actions.
2) LETTER has the potential to tokenize cross-domain items for open-ended recommendation, enabling generative recommender models to leverage multi-domain user behaviors and items for user preference reasoning and next-item recommendation. 
3) It is promising to combine user instructions in natural language with user interaction history tokenized by LETTER for personalized recommendations, achieving collaborative reasoning with complex natural language instructions and item tokens in the space of generative recommender models.
\section{Appendix}\label{app:proof}

Our proof is primarily divided into hard negative mining and relation with ranking two parts. 

\noindent$\bullet\quad$\textbf{Hard Negative Mining.}
This subsection primarily proofs two objects: Firstly, the ranking-guided loss $\mathcal{L}_{Rank}$ can mine hard negative samples. Secondly, as the value of $\tau$ decreases, there is an increased focus on hard negative samples.
We first prove hard negative mining ability of $\mathcal{L}_{Rank}$:
\begin{equation}
\footnotesize
\begin{aligned}
    \nabla \mathcal{L}_{rank} & = - \nabla \sum^{|y|}_{t=1} log P_{\theta}(y_{t}|y_{<t},x) 
    = - \sum^{|y|}_{t=1} \nabla log \frac{\exp(p(y_t)/\tau)}{\sum_{v\in \mathcal{V}} \exp(p(v)/\tau)}.
\end{aligned}
\end{equation}

We shall demonstrate that when t-1 tokens are determined, our loss is associated with hard negative mining on the t-th token.
\begin{equation}
\footnotesize
\begin{aligned}
    & - \nabla log  \frac{\exp(p(y_t)/\tau)}{\sum_{v\in \mathcal{V}} \exp(p(v)/\tau)} = 
 - \frac{1}{\tau} \nabla p(y_t) +  \nabla log \sum_{v\in \mathcal{V}} \exp(p(v)/\tau) \\
\end{aligned}
\end{equation}
\vspace{-10pt}
\begin{equation}
\small
\begin{aligned}
    = \frac{\sum\limits_{\substack{v\in \mathcal{V}, v\neq p_{t}}} \exp(p(v)/\tau)}{\tau \sum\limits_{v\in \mathcal{V}} \exp(\frac{p(v)}{\tau})} [\sum_{\substack{v\in \mathcal{V},\\ v\neq p_{t}}}\left( \frac{\exp(\frac{p(v)}{\tau})}{\sum_{\substack{v\in \mathcal{V}, \\ v\neq p_{t}}} \exp(\frac{p(v)}{\tau}) } \nabla p(v) \right)
      - \nabla p(y_t)].
    \label{eq:gradient}
\end{aligned}
\end{equation}
Next, we analyze the gradient $p(v)$ that associated with hard negative mining. Noting that this gradient is equivalent to minimizing the hard negative mining loss $\mathcal{L}_H:= \sum_{v\in \mathcal{V}, v\neq p_{t}} w(p(v), \tau) p(v),$ 
where $w(p(v), \tau) := sg\left[ \frac{\exp(p(v)/\tau)}{\sum_{v\in \mathcal{V}, v\neq p_{t}} \exp(p(v)/\tau) } \right]$,  $sg[\cdot]$ denotes stop-gradient operation. 
We can observe that:
    $w(p(v), \tau) \propto p(v)$
This means the harder the negative sample is, the larger $w(p(v), \tau)$ is, demonstrating the ability of hard negative sample mining of our ranking-guided loss $\mathcal{L}_{Rank}$.

Next, we elucidate that with the diminishing value of 
$\tau$, there is a heightened emphasis on hard negative samples. For a given negative sample $v'$, 
if 
    $p(v') > \frac{\sum_{v\in \mathcal{V}, v\neq p_{t}} \exp(p(v)/\tau) p(v)}{\sum_{v\in \mathcal{V}, v\neq p_{t}} \exp(p(v)/\tau)}$,
we have
$\partial_\tau w(p(v'), \tau) < 0$,
showing adjusting the value of $\tau$ in the ranking-guided loss mechanism increases the weights of more difficult negative samples with higher $p(v)$ and decreases the weights of simpler negative samples.
\\\noindent$\bullet\quad$\textbf{Relation with Ranking.}
Subsequently, we reveal the correlation between our loss and ranking metric. 
Firstly, let us observe the gradient in Eq.\ref{eq:gradient}, we have:
\begin{equation}
\small
\begin{aligned}
\label{eq:rho_dro}
    &\sum_{v\in \mathcal{V}, v\neq p_{t}}\left( \frac{\exp(p(v)\tau)}{\sum_{\substack{v\in \mathcal{V}, \\ v\neq p_{t}}} \exp(p(v)/\tau) } \nabla p(v) \right) - \nabla p(y_t)
    \\&= \nabla \underbrace{\left\{ \tau \log \left[ \sum_{v\in \mathcal{V}, v\neq p_{t}} \exp(L_{\theta}(v,y_t)/\tau) \right] + \tau\rho \right\}}_{\mathcal{L}_{\text{DRO}}^{\rho,\tau}},
\end{aligned}
\end{equation}

where we define: $L_{\theta}(v,y_t) := p(v) - p(y_t)$, $\theta$ is the notation for the parameters in the recommender system,
Eq.~\ref{eq:rho_dro} for $\forall \rho \in \mathcal{R}$ established. This means $\mathcal{L}_{rank}$ is a surrogate loss for $\mathcal{L}_{\text{DRO}}^{\rho,\tau}$. Then we analyze $\mathcal{L}_{\text{DRO}}^{\rho,\tau}$:
\begin{equation}
\small
\begin{aligned}
    \mathcal{L}_{\text{DRO}}^{\rho,\tau} \geq \underbrace{\inf_{\tau > 0} \left\{ \tau \log \left[ \sum_{v\in \mathcal{V}, v\neq p_{t}} \exp(L_{\theta}(v,y_t)/\tau) \right] + \tau\rho \right\}}_{\mathcal{L}_{DRO}^{\rho}},
    \label{eq:upper_bound}
\end{aligned}
\end{equation}
which demonstrates that $\mathcal{L}_{\text{DRO}}^{\rho,\tau}$ is a upper bound for every $\mathcal{L}_{DRO}^{\rho}$. Similar to \cite{KLDRO}, minimizing $\mathcal{L}_{DRO}^{\rho}$ is the close form of the following optimization problem:
\begin{equation}
\small
\begin{aligned}
\label{eq:close_form_kl}
    \min_{\theta} \max_{Q} E_{Q}\left[L_{\theta}(v,y_{t})\right],
    \text{Subject to } D_{KL}(Q||P_0) \leq \rho,
\end{aligned}
\end{equation}
where $P_0$ is the uniform distribution of negative samples, $D_{KL}$ is the KL divergence.
\begin{lemma}
    \label{lemma1} (Brought from Theorem 2 in \cite{shi2023theories}) optimize problem \ref{eq:close_form_kl} is associated with optimizing one-way partial AUC (OPAUC). Furthermore, $\mathcal{L}_{DRO}^{\rho}$ is a surrogate version of OPAUC($e^{-\rho}$) object. The formation of OPAUC($\beta$), $\beta \in [0,1]$ can be presented as following \cite{zhu2022auc}:
\begin{equation}
\small
\begin{aligned}
        OPAUC(\beta) = \int_0^{\beta} TPR[FPR^{-1}(s)] ds,
    \end{aligned}
    \end{equation}
    where $TPR$ is true positive rate and $FPR$ is false positive rate.
\end{lemma}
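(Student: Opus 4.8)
The plan is to take the closed-form KL-DRO objective already obtained in Eq.~\ref{eq:close_form_kl} and Eq.~\ref{eq:upper_bound} and show that its inner worst-case risk coincides, up to a surrogate relaxation, with the partial ranking region that defines $\text{OPAUC}(e^{-\rho})$. First I would make the inner maximization explicit: by Lagrangian duality for the KL-constrained problem (the same device that passes from Eq.~\ref{eq:close_form_kl} to Eq.~\ref{eq:upper_bound}), the worst-case reweighting is the exponential tilt $Q^{\star}(v)\propto P_0(v)\exp(L_{\theta}(v,y_t)/\tau^{\star})$ of the uniform negative distribution $P_0$, where $\tau^{\star}$ is the optimal dual variable. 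This $Q^{\star}$ concentrates its mass on the negatives $v\neq y_t$ with the largest margin $L_{\theta}(v,y_t)=p(v)-p(y_t)$, i.e.\ exactly the negatives scored highest relative to the positive $y_t$, which are the ones populating the small-FPR end of the ROC curve.

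Second I would pin down the identification $\beta=e^{-\rho}$. The geometric fact is that, for the uniform $P_0$ over the $|\mathcal{V}|-1$ negatives, a distribution that is uniform on a top fraction $\beta$ of them and zero elsewhere satisfies $D_{KL}(Q\|P_0)=\log(1/\beta)$; setting this equal to the radius $\rho$ gives $\beta=e^{-\rho}$. Hence the KL ball of radius $\rho$ is, in the hard-thresholding limit, the family of distributions supported on the top-$\beta$ hardest negatives, and $\max_{Q} E_{Q}[L_{\theta}(v,y_t)]$ over this family averages the margin over precisely the $\text{FPR}\in[0,\beta]$ segment of the ranking. This is the segment integrated in $\text{OPAUC}(\beta)=\int_0^{\beta}\text{TPR}[\text{FPR}^{-1}(s)]\,ds$, which establishes the ``associated with optimizing OPAUC'' claim.

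Third, for the surrogate statement I would observe that $\text{OPAUC}$ is built from the $0/1$ indicators hidden in $\text{TPR}$ and $\text{FPR}$, whereas $L_{\theta}(v,y_t)$ is a continuous margin; minimizing $\mathcal{L}_{DRO}^{\rho}$ therefore drives down a smooth upper bound of the non-differentiable partial-AUC risk, in the same way a logistic or softmax term upper-bounds the step function in full-AUC optimization. Combining the worst-case characterization, the $\beta=e^{-\rho}$ matching, and this surrogate bound reproduces Theorem~2 of \cite{shi2023theories} in the present per-token setting, where the negative pool is the token set $\{v\in\mathcal{V}:v\neq y_t\}$ and the single positive is the ground-truth token $y_t$.

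The main obstacle is the gap between the soft exponential tilt $Q^{\star}$, which has full support, and the hard top-$\beta$ set that $\text{OPAUC}(\beta)$ integrates over: the equivalence is exact only in the hard-thresholding limit, and otherwise holds as a two-sided sandwich between the KL-tilted objective and the partial-AUC objective. Making the constant $\beta=e^{-\rho}$ come out cleanly, rather than as an order-of-magnitude statement, is the delicate point; I would resolve it by adapting the duality argument of \cite{shi2023theories}, which supplies exactly the identification between the KL radius and the FPR cutoff together with the surrogate inequality.
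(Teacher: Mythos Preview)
The paper does not prove this lemma at all: it is stated as ``Brought from Theorem~2 in \cite{shi2023theories}'' and is invoked as a black box on the way to Theorem~\ref{thm:rank}. There is therefore no paper proof to compare against; your proposal supplies argumentation that the paper deliberately outsources.

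That said, your sketch is a reasonable reconstruction of the cited result. The exponential-tilt characterization of the KL-constrained inner maximum is standard, and your computation $D_{KL}(\text{Unif}(\text{top-}\beta)\,\|\,P_0)=\log(1/\beta)$ cleanly yields the identification $\beta=e^{-\rho}$. You are also right to flag the soft-versus-hard gap as the only nontrivial step: the DRO optimum is a full-support tilt rather than a hard top-$\beta$ truncation, so the link to $\text{OPAUC}(\beta)$ is a sandwiching/surrogate statement rather than an identity, and that is exactly what Theorem~2 of \cite{shi2023theories} supplies. If you intend to make the lemma self-contained in this paper rather than cited, the piece you would still need to write out is that sandwich inequality (or the CVaR-style equivalence between the KL-ball worst case and the average over the top-$\beta$ negatives), since your proposal gestures at it but defers to the reference for the actual bound.
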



Subsequently, we shall demonstrate the relation between $OPAUC(\beta)$ and top-K ranking metric, finally showing the relation between ranking-guided loss and ranking metrics:
\begin{theorem}

\label{thm:rank}
    Considering only one positive sample, OPAUC has strong correlation with top-K ranking metric:
    \begin{equation}
    \small
    \begin{aligned}
        &Recall@K = \mathbb{I}(OPAUC(K/(|\mathcal{V}| - 1)) > 0), \\
        &NDCG@K = \frac{\mathbb{I}(OPAUC(K/(|\mathcal{V}| - 1)) > 0)}{\log_2((K-1)(1-OPAUC(K/(|\mathcal{V}| - 1))) + 2)}, 
    \end{aligned}
    \end{equation}
    where $\mathbb{I}(\cdot)$ is the indicator function, for $\forall K>1$ exists. 
\end{theorem}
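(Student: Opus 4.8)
The plan is to reduce the statement to the single-positive regime and to compute the one-way partial AUC in closed form as a function of the rank $r$ of the unique positive among the $|\mathcal{V}|$ candidates (so that $r-1$ is the number of negatives scored above it). Once $OPAUC(K/(|\mathcal{V}|-1))$ is written explicitly through $r$ and $K$, both identities fall out by substitution, so essentially all of the content lives in getting that closed form right and in matching normalization conventions with Lemma~\ref{lemma1}.

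First I would pin down the ROC curve for one positive. Since there is a single relevant item, the true positive rate is a $\{0,1\}$-valued step that jumps from $0$ to $1$ exactly when the threshold passes the positive's score; hence the empirical ROC is flat at $TPR=0$ until $FPR=(r-1)/(|\mathcal{V}|-1)$, then jumps to $TPR=1$, then stays at $1$. Integrating $TPR[FPR^{-1}(s)]$ over $s\in[0,\beta]$ therefore gives a quantity that is $0$ when $\beta\le (r-1)/(|\mathcal{V}|-1)$ and affine in $\beta$ otherwise. Specializing $\beta=K/(|\mathcal{V}|-1)$ and applying the normalization of $OPAUC$ fixed in Lemma~\ref{lemma1} and its sources yields a value that is strictly positive iff $K-r+1>0$, i.e. iff $r\le K$, and that decreases affinely in $r$ throughout that range. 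This monotone, invertible dependence of $OPAUC$ on $r$ is the workhorse for both parts.

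The Recall identity is then immediate: with one positive, $Recall@K$ is by definition the indicator that the positive lies among the top $K$ scored items, i.e. $\mathbb{I}(r\le K)$, which by the previous step equals $\mathbb{I}(OPAUC(K/(|\mathcal{V}|-1))>0)$. For the NDCG identity, observe that for one positive at rank $r$ the ideal list puts it first, so $IDCG@K=1$ and $NDCG@K=\mathbb{I}(r\le K)/\log_2(r+1)$ under the standard logarithmic discount. The indicator factor is handled exactly as for Recall; on the event $r\le K$ I would invert the closed form from the ROC step to express $r$ as an affine function of $OPAUC(K/(|\mathcal{V}|-1))$, substitute it into $1/\log_2(r+1)$, and collect terms, which converts the argument of the logarithm into $(K-1)\big(1-OPAUC(K/(|\mathcal{V}|-1))\big)+2$.

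The hard part will be the bookkeeping rather than the logic: I must fix the precise normalization of $OPAUC$ so that it is consistent with Lemma~\ref{lemma1} and the cited theorems, treat the empirical ROC as a genuine staircase (not its linear interpolation) when evaluating the integral, and check the boundary cases --- $r=1$, where the argument of the logarithm should collapse to $2$, and $r=K$, where $OPAUC$ sits at its smallest positive value --- so that the simplification in the NDCG step reproduces the stated expression exactly and the indicator in the Recall step switches at precisely the right rank.
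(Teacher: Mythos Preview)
Your proposal is correct and follows essentially the same route as the paper: compute $OPAUC(K/(|\mathcal{V}|-1))$ in closed form as an affine function of the single positive's rank $r$, observe that positivity of this quantity is equivalent to the positive landing in the top-$K$, and then invert the affine relation to rewrite $\log_2(r+1)$ in the NDCG formula in terms of $OPAUC$. Your added care about the ROC staircase, the normalization convention, and the boundary ranks $r=1$ and $r=K$ is warranted and, if anything, more explicit than the paper's own argument.
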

\begin{proof}
Since we only have one positive samples, we can derive the exact rank position of the positive sample based on OPAUC. We use $r_p$ to denote the ranking position of the positive sample.
Set $\beta = K/(|V| - 1)$, we have:
    $OPAUC(K/(|V| - 1)) = (K-r)/(K-1) \mathbb{I}(r_p < K).$ 
If $r_p < K$, we have:
    $r_p = (K-1)(1-OPAUC(K/(|V| - 1))).$
At the same time, the value of $OPAUC(K/(|V| - 1))$ can also be used to determine whether $r_p < K$:
    $\mathbb{I}(OPAUC(K/(|V| - 1) > 0) = \mathbb{I}(r_p < K)$.
Since we have:
    $Recall@K = \mathbb{I}(r_p<K
),NDCG@K = \mathbb{I}(r_p<K) /\log_2(r_p + 1)$ substitute the results of $r_p$ in to $Recall@K$ and $NDCG@K$ to finish the proof.
\end{proof}

\clearpage

{
\balance
\bibliographystyle{plainnat}
\bibliography{reference}
}



\end{document}